\def\authorsaddresses#1{\dedicatory{#1}}
\newcommand{\cal}{\mathcal}
\newtheorem{theorem}{Theorem}[section]
\newtheorem{proposition}[theorem]{Proposition}
\newtheorem{corollary}[theorem]{Corollary}
\theoremstyle{definition}
\newtheorem{definition}[theorem]{Definition}
\newtheorem{example}[theorem]{Example}
\theoremstyle{remark}
\newtheorem{remark}[theorem]{Remark}
\numberwithin{equation}{section}
\begin{document}
\setcounter{page}{1}


\title[ Some Improvements in Fuzzy Turing Machines]{  Some Improvements in Fuzzy Turing Machines}

\author[Hadi Farahani]{Hadi Farahani}


\authorsaddresses{Department of Computer Science, Shahid Beheshti University, G.C, Tehran, Iran\\
h$_-$farahani@sbu.ac.ir}

\keywords{Theory of computation .  Fuzzy Turing Machine . Extended fuzzy Turing machine}

\begin{abstract}
In  this paper we modify some previous definitions of fuzzy Turing machines to  define the notions of accepting and rejecting degrees of inputs, \textit{computationally}. 
We use a  BFS-based search method and  obtain an upper level bound to guarantee the existence of accepting and rejecting degrees.
We show that fuzzy, generalized fuzzy and classical Turing machines have the same computational power.  
Next, we introduce the class of Extended Fuzzy Turing Machines equipped with  \textit{indeterminacy} states. 
These machines are used to catch some types of loops of  the classical Turing machines. Moreover, to each r.e. or co-r.e language, we correspond a fuzzy language which is indeterminable by an extended fuzzy Turing machine.

\end{abstract}
\maketitle




%


\section{Introduction}

In \cite{algorithm}, Lotfi Zadeh  defines the notion of fuzzy algorithm. His definition is based on a fuzzification of Turing machines. However, that work was not deep enough in the recursion theoretical aspects of the
mentioned model. That work is followed by the same setting in \cite{Lee}. The equivalency of previous fuzzy models is shown in \cite{Santos, Santos1}. Afterwards, the research in this field is revisited in \cite{Gerla87,Gerla1982,Harkteroad}. In \cite{Gerla87}, Biacino and Gerla generalize the definition of recursive enumerability introduced in \cite{Harkteroad}. 
%
%
Next,  Wiedermann  proposes a formal fuzzy computing model based on Turing machine model \cite{Wiedermann1, Wiedermann}. He claims that it is possible to accept r.e. sets and co-r.e. sets by fuzzy Turing machines and so these machines can solve the halting problem. So he claims that these fuzzy Turing machines have more computational power than the classical Turing machines.
In  \cite{2008}, Bedregal and Figueira  analyse Wiedermann's statement about the computational power of fuzzy Turing machines and show that Wiedermann's statement is not completely correct. They give a characterization of the class of n-r.e.  sets 
in terms of associated fuzzy languages recognized by fuzzy Turing machines. They also show that there is no universal fuzzy Turing machine which can simulate each machine  in the class of all fuzzy Turing machines. 
More recently, Moniri  defines the class of Generalized Fuzzy
Turing Machines \cite{Moniri}. His machines are equipped with both accepting and rejecting states. He studies some basic computability aspects of his machines and proves that a fuzzy language $L$ is decidable if and only if $L$ and $L^{c}$ are acceptable. 

In Section 3, we make some  essential modifications in previous definitions of fuzzy Turing machines in \cite{2008,Moniri,Wiedermann}. In these works, the notion of accepting or rejecting degree of an input is not \textit{computationally} well defined and there are some cases which these degrees \textit{can not be computed}. We modify these notions 
by (1) applying a BFS-based search  in the computational tree of a given fuzzy Turing machine on an input 
and (2) obtaining an upper bound on the number of levels in our BFS-based  search.   If we reach an accepting or a rejecting configuration in a level, then the upper bound indicates the number of next levels which  are needed to be traversed (at most) to \textit{determine the existence of another accepting or rejecting configuration}.
We also prove that  computational power of fuzzy, generalized fuzzy Turing machines, and classical machines are the same.

In Section 4, we establish a new class of  fuzzy Turing machines  that we call  Extended Fuzzy Turing Machines. These machines are equipped with 
\textit{indeterminacy} states as a new type of states. 
There are \textit{silent} transitions  between indeterminacy states and all other\textit{ non-accept} and \textit{non-reject} states with degree 1. By silent transition we mean a transition which does not change the position of head and the tape's content of the machine.
Indeterminacy states are applied to  catch some types of loops. The loop on an input is identified when its indeterminacy degree equals 0.  
Although extended fuzzy Turing machines are used to catch some types of loops, however they are not strong enough to solve the halting problem.
We  study some  basic computability properties of these machines. Moreover, 
to each r.e. or co-r.e language, we correspond a fuzzy language which is indeterminable by an appropriate extended  fuzzy Turing machine.

\section{ Preliminaries}
 In this section, we review some preliminaries from the  literature. 
 
 \begin{definition} \cite{Zadeh}
A fuzzy subset $A$ of a set $X$ is a function $\mu_{A}: X\rightarrow [0, 1]$, where for each $x\in X$,  $\mu_{A} (x)$  represents the grade of membership of the element $x\in X$ to $A$.
\end{definition}
 \begin{definition} \cite{Hajek}
A t-norm is a binary operation $\ast$ on $[0,1]$ satisfying commutativity, associativity, non-decreasing in both arguments, with the properties $0 \ast x = 0, 1 \ast x = x$, for all $x$. 
\end{definition}

\begin{definition}\cite{Moniri} Let $L_1$ and $L_2$ be two fuzzy languages. Also, let $\ast$ be a t-norm and $\ast^{'}$ be its dual t-conorm. The languages $L_{1}\ast L_{2}$ and $L_{1}\ast^{'}L_{2}$ are defined as follows:
$$( L_{1}\ast L_{2})(x)=L_{1}(x)\ast L_{2}(x) \ \ \qquad (L_{1}\ast^{'}L_{2})(x)=L_{1}(x)\ast^{'}L_{2}(x).$$
\end{definition}

%
%

\begin{definition} \cite{2008}
A  fuzzy Turing machine (FTM) is a triple ${\cal F}=({\cal T}, \ast, \mu)$, where 
${\cal T}=(Q, \Sigma, \Gamma, \Delta, q_s, F)$ is a non-deterministic Turing machine, $\ast$ is a t-norm and $\mu:Q \times \Gamma \times Q \times \Gamma \times \{R, L\} \rightarrow [0, 1]$ is a function.
In Turing machine $\cal T$, Q is a set of states, $\Sigma$ is a set of input symbols, $\Gamma$ is a set of tape symbols, transition relation $\Delta$ is a subset of $Q \times \Gamma \times Q \times \Gamma \times \{R, L\}$, $q_s \in Q$ is the starting state and $F \subseteq Q$ is the set of final states.
\end{definition}

\begin{definition}\cite{Moniri}
A Generalized fuzzy Turing machine (GFTM) is a tuple 
${\cal F}=({\cal T}, \ast, \ast^{'}, \eta)$, where ${\cal T}=(Q, \Sigma, \Gamma, \Delta, q_s, F)$ is a non-deterministic Turing machine such that  Q is a set of states, $\Sigma$ is a set of input symbols, $\Gamma$ is a set of tape symbols, $\Delta$ is a  \textbf{fuzzy subset} of $Q \times \Gamma \times Q \times \Gamma \times \{R, L\}$,  $q_s \in Q$ is the starting state, $F \subseteq Q$ is a set  of accepting and rejecting states, $\ast$ is a t-norm,  $\ast^{'}$ is the dual t-conorm of $\ast$ and $\eta : Q \times \Gamma \times Q \times \Gamma \times \{R, L\} \rightarrow [0, 1]$ is a function.
\end{definition}


\section{Corrections in Some Previous Definitions of FTMs }
In this section we  explain some computational objections about the definitions of fuzzy Turing machines in \cite{2008,Moniri,Wiedermann}. 
 The Wiedermann's idea in defining a Fuzzy Turing Machine (FTM) is to establish an uncertainty degree for the acceptance of a given string or the membership degree of a string to the language of a FTM. In order to compute this degree, he applies a composition on a t-norm evaluation. He considers 
the class of FTMs as a fuzzy extension of non-deterministic Turing machines (NTMs), where each transition has a membership degree. In order to define the acceptance degree of an input $w$, he defines the degree $d(C_t)$ of an arbitrary configuration $C_t$,  as the \textit{maximum} degree of all computational path $C_0 \preceq C_1 \preceq...\preceq C_t$ leading from the initial configuration $C_0=q_{0}w$ to $C_t$:
\begin{align*}
d(C_t)=max\{&degree(C_0,C_1,...,C_t)\in[0,1] :\\
&C_0 \preceq C_1\preceq...\preceq C_t ~\text{\textit{is a computational path from }$C_0$ \textit{to} $C_t$}\}
\end{align*}
He also defines the  accepting degree of a string $w$ in a FTM ${\cal F}=({\cal T}, \ast, \eta)$ as follows: 
$$deg_{\cal F}(w)=max\{d(uq_{f}v)\in[0,1] : q_{0}w \preceq^{*} uq_{f}v ~\text{for some $q_f\in F$}\}$$
where, ${\cal T}=(Q, \Sigma, \Gamma, \Delta, q_s, F)$ is a non-deterministic Turing machine and $F$ contains only \textit{accept} states.
Note that $q_{0}w \preceq^{*} uq_{f}v$ denotes that there is a computational path from $q_{0}w$ to $ uq_{f}v$.
 In \cite{2008,Moniri}, the same setting is followed. 
Bedregal and Figueira  \cite{2008}  change the \textit{maximum} to \textit{supremum} in  definitions of $d(C_t)$ and $deg_{\cal F}(w)$.  Moniri \cite{Moniri} assumes that $F$ may contain both \textit{accept} and  \textit{reject} states. Analogously, he defines accepting and rejecting degrees of a string $w$ using a composition on a t-conorm $\ast^{'}$ instead of \textit{maximum}, where $\ast^{'}$ is the dual of the t-norm $\ast$. 

For simplicity, we only continue discussion in the case of accepting degrees and our discussion can be extended to the case of rejecting degrees naturally. 
Here we use the notation $deg_{\cal F}(w)$ for accepting degree of $w$.

The important point is that \textit{none of the definitions above for  accepting degree of an input 
is not computationally well defined.} 
There are some cases in which these degrees can not be computed.  
The machine ${\cal T}$ is non-deterministic and when the machine is running on an input $w$ there are different computational branches. 
If there is a loop branch, then the machine continues searching in computational tree forever to find all accepting configurations. 
 This prevents the process to halt.
%
%
%
%
For example, the  machine illustrated in Figure 1, has infinitely many computational paths leading from $q_{s}01$ to $0q_{I}1$ and does not stop searching in the computational tree on input $01$. 
Thus, degree of configuration $0q_{I}1$ can not be \textit{computationally} obtained using  definitions in \cite{2008,Moniri,Wiedermann}.

\begin{figure}[t]
\centering \includegraphics[scale=0.4]{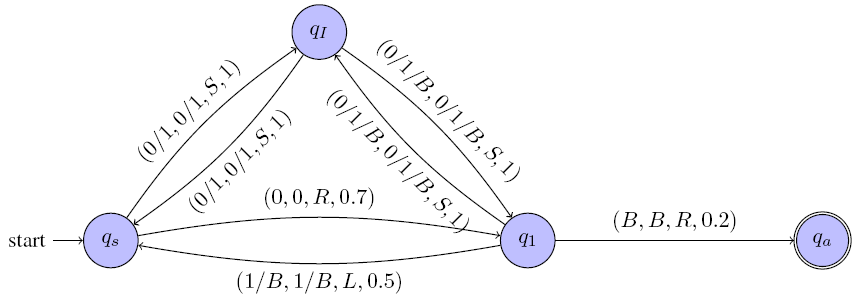}
\caption{ An Extended Fuzzy Turing Machine }
\label{pic2}
\end{figure}

In order to resolve this problem, we make some changes in  previous definitions. We take Moniri's  definition of $deg_{\cal F}(w)$ as the base and  make/use the following changes/facts:
\begin{itemize}
\item[$\bullet$] Any non-deterministic Turing machine can be simulated by a deterministic Turing machine  using a BFS search method (See Theorem 3.16 \cite{Sipser}). 
By applying and using some modifications in the introduced BFS search method, we simulate the non-deterministic Turing machine $\cal T$ by a deterministic Turing machine.
%
\item[$\bullet$] We find an upper bound for the number of levels (in our BFS-based search) which are needed to be traversed (at most) to guarantee the existence of $deg_{\cal F}(w)$,
\item[$\bullet$] We define  path independent degree  just for \textit{accepting} and \textit{rejecting} configurations. 
\item[$\bullet$] We assume that the function  $\eta : Q \times \Gamma \times Q \times \Gamma \times \{R, L\} \rightarrow [0, 1]$, 
gives  values 0 and 1  in the following special cases: 
\begin{itemize}
\item[(i)] a direct transition from initial state to an accept or a reject state is allowed to take degree 0,
\item[(2)]  transitions \textit{into} or \textit{out from} an indeterminacy state are allowed to take degree 1.
\end{itemize}
%
%
\end{itemize}
%
 Suppose that ${\cal F}=({\cal T}, \ast, \eta)$ is an FTM, where ${\cal T}=(Q, \Sigma, \Gamma, \Delta, q_s, F)$ is a non-deterministic Turing machine. 
  Let  $\cal D$ be the deterministic Turing machine which simulates  $\cal T$ 
  using the  BFS search method described in Theorem 3.16 \cite{Sipser}.
  We make the following modifications in the introduced BFS search method:
  \begin{itemize}
  \item[(i)]  If in BFS search an accepting configuration in a node  is reached, 
then we don't traverse any of the following  configurations initiated from that node. This modification is important because   for example if we have a silent move from  accept state to itself, i.e. for some $a\in \Gamma$ we have $(q_{accept}, a, q_{accept}, a, S)\in \Delta$, then we have a branch consisting infinitely many accepting configurations which are not effective in computing  $deg_{\cal F}(w)$.
    \item[(ii)]  Assume that  we are at level $n$ when $\cal D$ is simulating  $\cal T$ on  an input $w$.
   If we  visit an accepting configuration in this level, then we can find an upper bound for the number of levels  enough to be traversed (at most) to find another accepting configuration. If we could not find another accepting configuration in this  level bound, then it is guaranteed that there is no other accepting configuration and we can halt and compute $deg_{\cal F}(w)$. 
  \end{itemize}

In the sequel, we find the upper's level bound. Suppose that we reach an accepting configuration $C$ with fuzzy degree $d$ in level $n$.
 Consider our modified BFS in traversing the computational tree on input $w$. If we visit no new accepting configuration in next  $t=max_{d^{'}}\lceil log_{k}d/d^{'}\rceil$ levels, then we are sure that there is no other accepting configuration and we can compute $deg_{\cal F}(w)$. In this upper  bound, $maximum$ is taken over all degrees $d^{'}$ of  \textit{another configurations in level $n$ }and $k$ is the maximum value in the range of $\eta$. Note that $\Delta$ is a finite set and so the image of $\eta$ over $\Delta$ is a finite set and has a maximum value $k$.
%
%

Now we explain the idea to construct the given upper bound.
 In the definition of  the degree of a given path, the t-norm $\ast$ is applied between degrees of transitions along the path.
%
%
%
  So, if the degree of transitions is strictly less than 1, then the path's degree strictly decreases  when we traverse from the current level to next one in the computational tree.
  Assume that the current configuration in level $n$ 
  is $C^{'}$ with degree $d^{'}$ and $i$ is the minimum natural number which satisfies the condition $d^{'}\ast k^{i}\leq d$. Thus, for configurations in an arbitrary path initiated from $C^{'}$,  the only ones effective in computing $deg_{\cal F}(w)$ are (at most) in the next $``i"$ levels.
  %
%
  If we apply the same process for each arbitrary configuration in level $n$, then the upper bound $t=max_{d^{'}}\lceil log_{k}d/d^{'}\rceil$ can be defined as the maximum of these numbers ``$i$". 
%
%
%

Consequently, by the discussion above,  we can modify  previous definitions of FTMs or GFTMs to  compute the accepting or rejecting degrees of inputs in general. In Section 4, we apply the modifications so far in defining the class of Extended Fuzzy Turing Machines.

\begin{remark}
Let $\cal E=({\cal T}, \ast, \eta)$ be an FTM and the computational tree of ${\cal T}$ on an input $w$ has \textit{no loop branch}, then the accepting degree of each arbitrary configuration can be computationally defined and also there is no restriction on degrees of transitions. A similar discussion  also holds for GFTMs.
\end{remark}
\begin{remark}
If in an FTM $\cal E=({\cal T}, \ast, \eta)$ (or similarly in a GFTM), the machine  ${\cal T}$ be a \textit{deterministic} Turing machine, then all accepting or rejecting degrees in \cite{2008,Moniri,Wiedermann} are computationally well-defined.
\end{remark}
%
%
Now we compare the  computational power of fuzzy and generalized fuzzy Turing machines to classical Turing machines. Concerning Church-Turing thesis, this comparison is an  important issue in computability theory.
 It is known that FTMs and GFTMs are extensions of NTMs. In the following it is shown that there is a classical Turing machine which can simulate each FTM (also, in the same setting for GFTMs) and so, FTMs, GFTMs and classical Turing machines have the same computational power.

\begin{theorem}
There exists a classical Turing machine which can simulate each fuzzy Turing machine  on an input $w$ and yields the accepting  degree of $w$.
\end{theorem}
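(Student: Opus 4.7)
The plan is to describe a classical deterministic Turing machine $\mathcal{D}$ that on input $w$ computes $deg_{\mathcal{F}}(w)$ for the given FTM $\mathcal{F}=(\mathcal{T},\ast,\eta)$. The foundation is the standard three-tape BFS simulation of a non-deterministic machine by a deterministic one (Theorem~3.16 of \cite{Sipser}), into which I splice the two modifications introduced earlier in Section~3. Each frontier node of the BFS carries both the simulated configuration of $\mathcal{T}$ and the fuzzy degree of the computational path that produced it; the latter is updated whenever the node is expanded by applying $\ast$ to the transition degree supplied by $\eta$. Whenever the expansion of a node yields an accepting configuration $uq_f v$, its degree is written to a dedicated accepting tape and its successors are not enqueued, thereby implementing modification~(i) of Section~3.

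The termination mechanism is modification~(ii). Let $d$ denote the current maximum among the recorded accepting degrees, first attained at BFS level $n$. At that moment $\mathcal{D}$ reads the finitely many still-live configurations at level $n$ together with their degrees $d'$, computes $k=\max \eta(\Delta)$ (a rational maximum because $\Delta$ is finite), and computes $t=\max_{d'}\lceil \log_k(d/d')\rceil$. By the argument of Section~3, no configuration obtained by more than $t$ further transitions from level $n$ can have degree exceeding $d$; so if the next $t$ BFS levels contribute no new accepting configuration, the machine halts and outputs $d$. Otherwise, when a larger accepting degree $d''>d$ appears at some level $n'<n+t$, the parameters $d$, $n$ and $t$ are refreshed relative to the frontier at level $n'$ and the search resumes. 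At termination, $d=\max\{d(uq_f v):q_0 w\preceq^{*}uq_f v,\ q_f\in F\}=deg_{\mathcal{F}}(w)$ by definition, and since every bookkeeping step uses only classical manipulations of rational data together with the computable functions $\eta$ and $\ast$, the simulator $\mathcal{D}$ is genuinely a classical Turing machine; the same scheme, run in parallel for the dual t-conorm on the rejecting states, transfers the argument to GFTMs.

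The main obstacle is the justification of the stopping rule, in particular that $t$ is well defined and correct. One needs $k<1$ for the logarithm to make sense, which is secured by modification~(i): any degree-$1$ edge that could otherwise sustain an infinite path feeding accept states is pruned the instant it enters an accepting configuration, so on the unpruned remainder the relevant maximum $k$ is strictly below $1$. Verifying that the refreshed $t$ remains valid after each replacement of $d$ by a larger value, and that no subtle interaction between pruning and frontier bookkeeping invalidates the bound, is the real technical content of the proof; once this is settled, the remainder of the construction is routine.
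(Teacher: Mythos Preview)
Your approach is essentially the paper's own: a multi-tape deterministic BFS simulation that carries the path degree alongside each configuration, prunes at accepting configurations, and terminates via the level bound $t$ from Section~3; the paper's proof is in fact only a sketch, and your write-up supplies more operational detail (the refresh of $t$ when a new accepting degree appears) than the paper does. The one slip is your justification that $k<1$: pruning at accepting configurations (modification~(i)) does not lower the maximum transition degree, since a degree-$1$ edge between two non-accepting states would survive pruning and still make $\log_k$ undefined; what actually secures $k<1$ is the paper's standing restriction on $\eta$ in Section~3, namely that the value $1$ is reserved exclusively for transitions involving indeterminacy states, which a plain FTM does not have.
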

\begin{proof}
We give the sketch of the proof. Let ${\cal F}=({\cal T}, \ast, \eta)$ be an FTM, where, ${\cal T}=(Q, \Sigma, \Gamma, \Delta, q_s, F)$ is a non-deterministic Turing machine. First note that the language of ${\cal F}$ includes pairs of the form $(w, deg_{\cal F}(w))$. 
Assume that $D$ is the deterministic Turing machine which simulates $\cal T$ by applying our modified BFS search. We propose a classical 3-tapes deterministic Turing machine $M$ which can simulate the machine  ${\cal F}$ such that for each input $w$:
\begin{itemize}
\item[$\bullet$] $M$ accepts $w$ iff   ${\cal T}$ accepts $w$,
\item[$\bullet$] If ${\cal T}$ accepts $w$, then $M$ outputs $deg_{\cal F}(w)$.
\end{itemize} 
Intuitively the machine $M$ works as follows:
\begin{itemize}
\item[$\bullet$] Its first tape is the input tape which remains unchanged during the computation,
\item[$\bullet$] Its second tape is the work tape and the simulation of $\cal T$  on $w$  is executed on this tape,
\item[$\bullet$] Its third tape holds both the degree of the current configuration and  also the degrees of  visited accepting configurations.
\end{itemize}
If after visiting all accepting configurations (considering the upper level bound) the machine $D$ halts on $w$,  then the machine $M$ computes the degree of $w$ using  the contents of its third tape and outputs this degree.
\end{proof}
Likewise, it can be shown that there exists a classical Turing machine which can simulate each GFTM and give the accepting or rejecting degree of a given input. 

\section{Extended  Fuzzy Turing Machines and Some Computability Results}

In this section, 
we extend FTMs and  GFTMs  to  machines with some new type of  states that we call \textit{indeterminacy} states. We also  study some computability properties of these extended machines. 
%

\subsection{Extended  Fuzzy Turing Machines}

Considering the modifications mentioned in Section 3, we define   the class of Extended  Fuzzy Turing Machines as an extension of the class of Generalized Fuzzy Turing Machines (GFTMs) defined by Moniri \cite{Moniri}. In defining extended  fuzzy Turing machines, we consider some modifications with respect to  Moniri's definition: (1) we specify \textit{indeterminacy} states as a new type of states; 
(2) unlike Moniri, we assume that the transition relation $\Delta$ is a \textit{\textbf{crisp}} set and it is not \textit{\textbf{fuzzy}}.
%


\begin{definition}\label{Definition4.1}
 An \textit{Extended  Fuzzy Turing Machine} (briefly, EFTM) is a tuple 
${\cal E}=({\cal T}, \ast, \ast^{'}, \eta)$, where ${\cal T}=(Q, \Sigma, \Gamma, \Delta, q_s, F )$ is a non-deterministic Turing machine. $Q$ is a set of states which consists a special set $\cal I$ of \textit{ indeterminacy} states, $\Sigma$ is a set of input symbols, $\Gamma$ is a set of tape symbols containing  the blank symbol, $\Delta$ is a \textit{\textbf{crisp subset}} of $Q \times \Gamma \times Q \times \Gamma \times \{R, L,S\}$, ${q_{s}\in Q}$ is the start state, $F={\cal A}\cup{\cal R}\subseteq Q$ consists accepting and rejecting states,
$\ast$ is a t-norm and $\ast^{'}$ is its dual t-conorm and $\eta : Q \times \Gamma \times Q \times \Gamma \times \{R, L,S\} \rightarrow [0, 1]$ is a function which corresponds  a \textit{truth degree} to each move in $\Delta$ such that: 
\begin{itemize}
\item[$(i)$] transitions that lead to an indeterminacy state or exit from it, do not change the head position and the tape's content, i.e. for each transition $(q,a,p,b,m)\in Q \times \Gamma \times Q \times \Gamma \times \{R, L,S\}$:
$$q\in \cal I ~\text{or}~p\in \cal I~~~\Longrightarrow   \ \ \  a=b \text{ and } m=S,$$ 
\item[$(ii)$] for each $(q,a,p,b,m)\in Q \times \Gamma \times Q \times \Gamma \times \{R, L,S\}$:
$$\eta(q,a,p,b,m)=1 ~~~\text{ iff   \ \ \ at least  one of } q \text{ or } p \text{ is in } \cal I,$$ 
\item[$(iii)$] a direct transition from  initial state to an accept or a reject state can take degree 0, i.e. for each $(q,a,p,b,m)\in Q \times \Gamma \times Q \times \Gamma \times \{R, L,S\}$:
$$\eta(q,a,p,b,m)=0 ~~~\text{ iff }~~~ q=q_s \text{ and } (p=q_{a}  \text{ or } p=q_{r}) $$ 
where, $q_{a}$ and $q_{r}$ are accept  and reject states, respectively.
\end{itemize}

\end{definition}

\begin{remark}
Note that although  the transition relation $\Delta$ in a GFTM  
is considered as a \textit{fuzzy subset}, but we  define $\Delta$ as a \textit{crisp subset} in our model. In this way,  $\eta$ has the desired meaning.
\end{remark}
%
Instantaneous description (ID) which is the unique description of a machine's tape, is defined as usual (see \cite{Wiedermann1}). If $\alpha,\alpha^{'}$ are two IDs, then $\alpha\preceq^{r}\alpha^{'}$ means that there is a move
in $\Delta$ with truth degree $r$ leading from $\alpha$ to $\alpha^{'}$ in one step. 
%
%
Let $\alpha_{t}$ be reachable from $\alpha_{0}$ in $t$ steps through the computational path $\alpha_{0} \preceq^{r_{1}} \alpha_{1} \preceq^{r_{2}} ... \preceq^{r_{t}} \alpha_{t}$, then the truth degree of this path is defined as $r_{1}\ast...\ast r_{t}$, if  $ t\geq1$ and 1, if $t=0$.
Due to non-determinism, an arbitrary configuration such as $\alpha_t$ can be reached from $\alpha_0$ through different computational paths, and so the degree of a configuration should be defined path independently. 
%
We define the path independent degree of a configuration just for \textit{accepting} and \textit{rejecting} configurations. Let $\alpha_h$ be an accepting or a rejecting configuration, then we define its degree as follows:
$$d(\alpha_h)=\ast^{'}_{j\in J}a_{j}$$
 where, $\{a_j: j\in J\}$ is the set of all truth degrees of all computational paths leading from the initial ID $\alpha_0$ to $\alpha_h$ such that the computational paths are traversed considering the obtained upper level bound for the given modified BFS search. 
%

We call the computational path $\alpha_{0} \preceq^{r_{1}} \alpha_{1} \preceq^{r_{2}} ... \preceq^{r_{t}} \alpha_{t}$,
an accepting, a rejecting or an indeterminacy path, if $\alpha_{t}$ is an accepting, a rejecting or an indeterminacy ID, i.e. in the form $uq_a v$, or $uq_r v$ or $uq_i v$, where $u$, $v$ are two strings in $\Gamma^{\ast}$ and $q_a$, $q_r$, $q_i$ are  states in ${\cal {A, R, I}}$, respectively.
%

\begin{definition} \label{inf}
 Let ${\cal E}=({\cal T}, \ast, \ast^{'}, \eta)$ be an EFTM and  $w\in \Sigma^{*}_{\cal T}$. If there exists at least one accepting (or rejecting) path on input $w$, then the accepting (or rejecting)  degree of $w$, denoted by  $e_{{\cal E}}(w)$ (or $e_{{\cal E}}^{'}(w))$, is defined as follows:
$$e_{{\cal E}}(w)=max_{\alpha_{a}}\{d(\alpha_a)~ |~ q_s w\preceq^{\ast} \alpha_{a}\}$$
$$(or~e_{{\cal E}}^{'}(w)=max_{\alpha_{r}}\{d(\alpha_r)~ |~ q_s w\preceq^{\ast} \alpha_{r}\})$$
%
%
where, $\alpha_{a}$ (or $\alpha_{r}$) is an accepting (or a rejecting) ID. Otherwise, if there is no desired path we define
$e_{{\cal E}}(w)=0(=e_{{\cal E}}^{'}(w))$. 
\end{definition}
\begin{remark}
In the definition of an EFTM, the set of states $Q$ and the set of symbols $\Gamma$ are  finite and so the sets of  accepting and rejecting  configurations on a given input are finite. Therefore, unlike Moniri's, in Definition \ref{inf} we use  \textit{maximum} instead of supremum in  the definitions of accepting and rejecting  degrees.
\end{remark}


%
%

Above, we computationally defined  path independent truth degree for an accepting or a rejecting configuration only. Now, we  define the path independent truth degree of an\textit{ indeterminacy }configuration $\alpha_i$  such that it is not computationally definable in general. 
Assume that ${\cal E}=({\cal T}, \ast, \ast^{'}, \eta)$ is an EFTM and  $w\in \Sigma^{\ast}_{\cal T}$. Let $\alpha_i$ be an indeterminacy configuration on $w$, then we define:
$$d(\alpha_i)=Inf_{j\in I}b_{j}$$
where, for each $j$,   $b_j$ is the  truth degree of a computational path leading from $\alpha_0$ to $\alpha_i$. 
Note that $d(\alpha_i)=Inf_{j\in I}a_{j}$ is mathematically definable. 
If there exists at least one indeterminacy path on an  input $w$, then we define the indeterminacy degree of $w$ as follows:
$$e_{{\cal E}}^{''}(w)=min_{\alpha_{i}}\{d(\alpha_i)~ |~ q_s w\preceq^{\ast} \alpha_{i}\}.$$
where, $\alpha_{i}$  is an indeterminacy ID. Otherwise, $e_{{\cal E}}^{''}(w)=0$.

\begin{example}
Consider the EFTM shown in Figure 1. It can be verified that the accepting degree of  $w_{1}=0$ equals 0.14, and the indeterminacy degree of  $w_{2}=01$ equals 0.
%
%
\end{example}

%
%

In the sequel, we give a proposition which characterizes the loops of  classical Turing machines using  EFTMs.
\begin{proposition}\label{power}
For each   classical Turing machine $M$ that loops on an input  $w$, 
 there exists an EFTM $\cal K$ such that  $e_{\cal K}^{''}(w)=0$.
\end{proposition}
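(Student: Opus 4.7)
The plan is to exhibit an EFTM $\cal K$ that faithfully simulates $M$ while offering, at every configuration of the simulation, a silent escape to a single indeterminacy state $q_I$. Choosing $\ast$ to be the \L{}ukasiewicz t-norm $a\ast b=\max(0,a+b-1)$ is crucial: any $\ast$-product of values $\leq 1/2$ collapses to $0$ after just two factors, so every sufficiently long simulation path ending in $q_I$ carries truth degree exactly $0$.

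Concretely, starting from $M=(Q,\Sigma,\Gamma,\delta,q_s,q_a,q_r)$, I would take $\cal K=({\cal T}',\ast,\ast',\eta)$ where ${\cal T}'$ has state set $Q\cup\{q_I\}$, indeterminacy set $\{q_I\}$, start state $q_s$, accept/reject states $q_a,q_r$, and transition relation $\Delta$ consisting of (a) every transition of $M$, to which $\eta$ assigns $1/2$ in general but $0$ on any transition of the form $q_s\to q_a$ or $q_s\to q_r$, and (b) for every $q\in Q$ and $a\in\Gamma$, the silent move $(q,a,q_I,a,S)$, to which $\eta$ assigns $1$. Moves of type (b) are silent and incident only to the indeterminacy state, so clauses (i) and (ii) of Definition \ref{Definition4.1} are satisfied, and the special values assigned in (a) make clause (iii) hold as well.

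Since $M$ loops on $w$, for every $n\geq 1$ the machine $\cal K$ admits the deterministic simulation prefix $q_s w=\beta_0\preceq^{1/2}\beta_1\preceq^{1/2}\cdots\preceq^{1/2}\beta_n$ that mirrors $n$ steps of $M$; none of these steps can be a degree-$0$ $q_s\to q_a/q_r$ transition, since $M$ would otherwise halt in one step, contradicting the looping hypothesis. Appending the silent transition to $q_I$ yields a full computational path into the indeterminacy ID $\alpha^{(n)}$ obtained from $\beta_n$ by replacing the state by $q_I$, and the truth degree of this path is $(1/2)^{\ast n}\ast 1$. A direct calculation gives $(1/2)\ast(1/2)=\max(0,1/2+1/2-1)=0$, so by associativity $(1/2)^{\ast n}=0$ for all $n\geq 2$, and therefore every such path into $\alpha^{(n)}$ (with $n\geq 2$) has truth degree $0$.

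Fixing any $n\geq 2$, we conclude $d(\alpha^{(n)})=\inf_{j}b_{j}\leq 0$, hence $d(\alpha^{(n)})=0$, and the set of indeterminacy IDs on $w$ is non-empty, so by Definition \ref{inf} we obtain $e_{{\cal K}}^{''}(w)=\min_{\alpha_i}\{d(\alpha_i)\}\leq d(\alpha^{(2)})=0$, giving the desired $e_{{\cal K}}^{''}(w)=0$. The only delicate point I foresee is the choice of t-norm: with the G\"odel t-norm $\min$, iterated $\ast$-products of $1/2$ stagnate at $1/2$ and no finite simulation path can be driven to degree $0$, so one would then need a more subtle construction that realizes a single indeterminacy ID as the common target of infinitely many paths whose degrees tend to $0$ (for example via a Pigeonhole argument on bounded-tape loops). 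Using \L{}ukasiewicz removes this subtlety and reduces the proof to the one-line calculation above.
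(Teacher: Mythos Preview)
Your construction is essentially the same as the paper's: take $M$, assign a fixed degree in $(0,1)$ to each of its transitions, adjoin a single indeterminacy state $q_I$ reachable silently from every non-halting state, and then argue that the infimum over path degrees into some indeterminacy ID is $0$. The paper's proof stops at ``it can be shown'' and never names a t-norm; your version is more explicit and, in one respect, sharper. By fixing the \L{}ukasiewicz t-norm and the value $1/2$, you force every simulation path of length $\geq 2$ to collapse to degree exactly $0$, so a single indeterminacy ID $\alpha^{(2)}$ already witnesses $e_{\cal K}^{''}(w)=0$. The paper, by contrast, implicitly relies on the ``loop'' hypothesis in the strict sense of Remark~\ref{power1} (a repeated configuration), so that one indeterminacy ID is hit by arbitrarily long paths whose degrees (under, say, the product t-norm) have infimum $0$. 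Your final paragraph correctly flags that with the G\"odel t-norm the paper's sketch fails outright, since all path degrees are bounded below by the minimum transition degree; this is a genuine gap in the paper's presentation that your choice of $\ast$ cleanly avoids. One cosmetic point: you add silent moves $(q,a,q_I,a,S)$ for \emph{every} $q\in Q$, including $q_a,q_r$; the paper restricts to non-accept, non-reject states, and since the run on $w$ never reaches $q_a$ or $q_r$ this makes no difference to the argument.
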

\begin{proof} The machine $\cal K$ is constructed from $M$ as follows:
\begin{itemize}
\item[$\bullet$] consider the start state of $M$ as its start state,
\item[$\bullet$] correspond a non-trivial degree in open interval (0,1)  to each transition of $M$,
\item[$\bullet$] add a state $q_I$ as  indeterminacy state and consider  transitions between $q_I$ and  all non-accept and non-reject states of $M$.
\end{itemize}
 It can be shown that if the machine $M$ loops on an input $w$, then the indeterminacy degree of $w$ is 0, i.e. $e_{\cal K}^{''}(w)=0$. 
\end{proof}

\begin{remark}\label{power1}
By Proposition \ref{power},  EFTMs can catch the loops but, these machines are not powerful enough to catch some non-halting cases such as \textit{configuration expansion} and so EFTMs can not solve the halting problem.
\end{remark}

%

Let ${\cal E}=({\cal T}, \ast, \ast^{'}, \eta)$ be an EFTM. We define $A({\cal E})$, $R({\cal E})$ and $I({\cal E})$, as  fuzzy languages accepted,  rejected or  indeterminated by $\cal E$, which their membership functions are  $e_{{\cal E}}$, $e_{{\cal E}}^{'}$ and $e_{{\cal E}}^{''}$, respectively. In this setting we can think of ${\cal E}$ as the 
triple $({e_{{\cal E}}},e_{{\cal E}}^{'},e_{{\cal E}}^{''})$. Note that $I({\cal E})$ contains all pairs $(w,e_{{\cal E}}^{''}(w))$ which their first element  $w$ is neither accepted nor rejected by ${\cal T}$. So, intuitively $I({\cal E})$ consists all pairs $(w,e_{{\cal E}}^{''}(w))$  which ${\cal T}$ does not halt  on $w$.

 We  define the notion of an  \textit{acceptable} or a \textit{decidable} fuzzy language  similar to \cite{Moniri}, but we define the new notion of an \textit{indeterminable} fuzzy language in the following definition. 
\begin{definition}\label{definition}Let $L$ be a fuzzy language.$ $ \\
\begin{itemize}
\item[$(i)$] $L$ is acceptable if there is an EFTM ${\cal E}$ such that $L =A({\cal E})$,\\
\item[$(ii)$] $L$ is decidable if there is an EFTM ${\cal E}$ such that $L = A({\cal E})$ and $L^{c}=R({\cal E})$,\\
\item[$(iii)$] $L$ is indeterminable if there is an EFTM ${\cal E}$ such that $L=A({\cal E})$ and $L < I({\cal E})$, i.e. for each $w\in L$ we have $e_{{\cal E}}(w)<e_{{\cal E}}^{''}(w)$.

%
\end{itemize}
\end{definition}
 Note that $L^{c}$ is defined as $L^{c}(x)=1-L(x)$. In the following we restate Proposition 3.5, Corollary 3.6 and  Proposition 3.8 of \cite{Moniri}, which also hold here.

\begin{proposition}
A fuzzy language $L$ is decidable if and only if $ L$ and $L^c$ are acceptable.
\end{proposition}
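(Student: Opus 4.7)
The plan is to prove the two directions separately, with the main work in the ``if'' direction where two acceptors must be merged into one decider.

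For the ``only if'' direction, I would observe that if $\mathcal{E}$ witnesses decidability of $L$ (so $L=A(\mathcal{E})$ and $L^c=R(\mathcal{E})$), then $\mathcal{E}$ itself witnesses acceptability of $L$. To witness acceptability of $L^c$, I would form $\mathcal{E}'$ from $\mathcal{E}$ by swapping the roles of the accept set $\mathcal{A}$ and the reject set $\mathcal{R}$, keeping $\Delta$, $\eta$, and all other components unchanged. Since Definition~\ref{inf} defines accepting and rejecting degrees by the same formula on the respective state families, every accepting path of $\mathcal{E}'$ is a rejecting path of $\mathcal{E}$ with identical truth degree, giving $e_{\mathcal{E}'}(w)=e'_{\mathcal{E}}(w)=L^c(w)$.

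For the ``if'' direction, I would take EFTMs $\mathcal{E}_1$ and $\mathcal{E}_2$ with $A(\mathcal{E}_1)=L$ and $A(\mathcal{E}_2)=L^c$, sharing the same t-norm $\ast$ and t-conorm $\ast'$, and build a new EFTM $\mathcal{E}$ that nondeterministically launches either simulation from a fresh start state $q_s$, routing $\mathcal{E}_1$'s accepting states into a single accept state $q_a$ of $\mathcal{E}$ and $\mathcal{E}_2$'s accepting states into a single reject state $q_r$ of $\mathcal{E}$. The step I expect to be most delicate, and the main obstacle, is implementing the initial branching without perturbing the t-norm product along each path: clause $(ii)$ of Definition~\ref{Definition4.1} forbids degree-$1$ transitions whose endpoints lie outside $\mathcal{I}$, yet I need degree-$1$ transitions to deliver control to the start states $q_{s,1},q_{s,2}$ of the two sub-machines. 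I would resolve this by inserting two fresh indeterminacy states $q_{I,1},q_{I,2}\in\mathcal{I}$ and silent transitions $q_s\to q_{I,k}\to q_{s,k}$ for $k=1,2$; these are forced to have degree $1$ by clause $(ii)$ and are legal under clause $(i)$ since they leave the head and tape content unchanged.

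To finish, I would verify the two degree identities. Any accepting computation of $\mathcal{E}$ has the form $q_s w\preceq^{1}q_{I,1}w\preceq^{1}q_{s,1}w\preceq^{r_1}\cdots\preceq^{r_t}uq_a v$, whose truth degree collapses to $r_1\ast\cdots\ast r_t$ using the t-norm identity $1\ast x=x$, matching the degree of the corresponding accepting path of $\mathcal{E}_1$ on $w$. Aggregating by the t-conorm $\ast'$ over paths into each accepting ID and then taking the maximum over accepting IDs (as in Definition~\ref{inf}) yields $e_{\mathcal{E}}(w)=e_{\mathcal{E}_1}(w)=L(w)$. The symmetric argument on the $\mathcal{E}_2$ branch, where accept states of $\mathcal{E}_2$ become reject states of $\mathcal{E}$, gives $e'_{\mathcal{E}}(w)=e_{\mathcal{E}_2}(w)=L^c(w)$, so $\mathcal{E}$ decides $L$. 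A small loose end to confirm is that the added indeterminacy transitions create no new accepting or rejecting IDs beyond those inherited from $\mathcal{E}_1$ and $\mathcal{E}_2$, which holds because $q_{I,1},q_{I,2}$ belong to $\mathcal{I}$ and are disjoint from $\mathcal{A}\cup\mathcal{R}$.
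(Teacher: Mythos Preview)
The paper does not supply its own proof here: it merely restates Proposition~3.5 of \cite{Moniri} and asserts that the argument carries over to the EFTM setting. Your proposal is therefore more detailed than what the paper offers, and the overall strategy is sound; in particular, your observation that the initial nondeterministic branch must be threaded through fresh indeterminacy states in order to obtain degree-$1$ transitions consistent with clause~$(ii)$ of Definition~\ref{Definition4.1} is exactly the EFTM-specific wrinkle absent from Moniri's original framework, and you handle it correctly.

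Two minor points deserve attention. First, the phrase ``routing $\mathcal{E}_1$'s accepting states into a single accept state $q_a$'' is unnecessary and potentially problematic if read as adding new transitions, since any such transition would carry a degree strictly below $1$ by clause~$(ii)$ and would perturb the t-norm product; it is cleaner simply to declare $\mathcal{A}_{\mathcal{E}}=\mathcal{A}_{\mathcal{E}_1}$ and $\mathcal{R}_{\mathcal{E}}=\mathcal{A}_{\mathcal{E}_2}$. Second, clause~$(iii)$ of Definition~\ref{Definition4.1} is a biconditional, so any degree-$0$ transitions that $\mathcal{E}_1$ or $\mathcal{E}_2$ may have had out of their original start states would, in the combined machine, emanate from the non-start states $q_{s,1},q_{s,2}$ and thus violate that clause; since $0\ast x=0$, such transitions contribute nothing to any accepting or rejecting degree and can be deleted harmlessly, but this should be noted for the construction to yield a bona fide EFTM.
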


\begin{corollary} \label{decidable}
 If a fuzzy language $L$ is decidable, then $L^c$ is decidable,
\end{corollary}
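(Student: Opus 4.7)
The plan is to deduce this immediately from the preceding proposition (the characterization of decidability via acceptability of $L$ and $L^c$), together with the fact that complementation on fuzzy languages is an involution.

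First I would unfold the hypothesis: since $L$ is decidable, the proposition gives an EFTM witnessing that both $L$ and $L^c$ are acceptable, so in particular both $L$ and $L^c$ lie in the class of acceptable fuzzy languages.

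Next I would verify that $(L^c)^c = L$. From the definition $L^c(x) = 1 - L(x)$, one computes $(L^c)^c(x) = 1 - L^c(x) = 1 - (1 - L(x)) = L(x)$ pointwise, so $(L^c)^c$ and $L$ coincide as fuzzy languages. Hence the pair $\{L^c, (L^c)^c\} = \{L^c, L\}$ is the same pair already shown to be acceptable.

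Finally I would apply the proposition in the other direction, now with $L^c$ playing the role of the language under consideration: since $L^c$ and $(L^c)^c$ are both acceptable, the proposition yields that $L^c$ is decidable, which is the desired conclusion. There is no substantive obstacle here; the only thing worth double-checking is the involutivity of the complement, which is immediate from the definition. The result is essentially a symmetry remark: decidability is preserved under the operation $L \mapsto L^c$ because the characterization in terms of acceptability of $\{L, L^c\}$ is itself symmetric in $L$ and $L^c$.
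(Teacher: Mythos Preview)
Your proposal is correct and matches the paper's approach: the paper presents this as an immediate corollary of the preceding proposition (decidable iff $L$ and $L^c$ are acceptable), with no further proof given, and your argument via the involutivity $(L^c)^c=L$ is exactly the one-line deduction the label ``Corollary'' is meant to signal.
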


\begin{proposition} 
Let $L_1$ and $L_2$ be two fuzzy languages. Assume that $L_1$ and $L_2$ are accepted by EFTMs ${\cal E}_1$ and ${\cal E}_2$ equipped with the same t-norm $\ast$ and let $*^{'}$ be the dual t-conorm of $\ast$. Then $L_{1}\ast^{'} L_{2}$ is accepted by an EFTM equipped with the same t-norm and t-conorm.
\end{proposition}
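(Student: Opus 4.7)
The plan is to construct an EFTM $\mathcal{E} = (\mathcal{T}, \ast, \ast', \eta)$ that, on any input $w$, non-deterministically forks at its very first step between simulating $\mathcal{E}_1$ and simulating $\mathcal{E}_2$, routing the fork through an indeterminacy state so that the initial branching contributes degree $1$ and preserves each simulated path's degree. Concretely, take the disjoint union of the state sets of the underlying machines $\mathcal{T}_1$ and $\mathcal{T}_2$, merge their accepting (respectively rejecting) states into a single $q_a$ (respectively $q_r$), and adjoin a fresh start state $q_s$ together with a fresh indeterminacy state $q_I$. In $\eta$ retain every transition of $\mathcal{T}_1$ and $\mathcal{T}_2$ with its inherited truth degree, and for each tape symbol $a$ and $i = 1, 2$ add the silent transitions $(q_s, a, q_I, a, S)$ and $(q_I, a, q_s^i, a, S)$; condition $(ii)$ of Definition \ref{Definition4.1} forces each of these to take degree $1$, and conditions $(i)$ and $(iii)$ are then immediate.

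Next I would check that every computational path of $\mathcal{E}$ on $w$ is obtained by prefixing a computational path of $\mathcal{E}_1$ or $\mathcal{E}_2$ on $w$ with the two silent steps $q_s w \preceq^{1} q_I w \preceq^{1} q_s^{i} w$, and conversely. Since $1 \ast r = r$, the truth degree is preserved under this lifting. For a common accepting ID $u q_a v$ of $\mathcal{E}$, the paths leading to it are therefore the disjoint union of paths to $u q_a^1 v$ in $\mathcal{E}_1$ and paths to $u q_a^2 v$ in $\mathcal{E}_2$, so by associativity of $\ast'$ one obtains
\[
d_{\mathcal{E}}(u q_a v) \;=\; d_{\mathcal{E}_1}(u q_a^1 v) \,\ast'\, d_{\mathcal{E}_2}(u q_a^2 v).
\]
Taking the maximum over the (finitely many, by the remark after Definition \ref{inf}) accepting IDs of $\mathcal{E}$ and invoking the monotonicity of $\ast'$ is then the final step needed to conclude $e_{\mathcal{E}}(w) = L_1(w) \ast' L_2(w)$, and hence $A(\mathcal{E}) = L_1 \ast' L_2$.

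The delicate point, and where I would focus most of the proof effort, is that last commutation of $\max$ with $\ast'$: in general $\max_{(u,v)}(f(u,v) \ast' g(u,v))$ need not equal $(\max f) \ast' (\max g)$, because the maximisers for $f = d_{\mathcal{E}_1}(u q_a^1 v)$ and $g = d_{\mathcal{E}_2}(u q_a^2 v)$ may occur on different tape contents. The way I would handle this is to normalise each $\mathcal{E}_i$ beforehand so that all of its accepting computations end in a single canonical ID (for example with the head at position $0$ on a blank tape). With this normalisation in place the unified accept state $q_a$ of $\mathcal{E}$ supports a single accepting ID, $d_{\mathcal{E}}(q_a)$ equals $L_1(w) \ast' L_2(w)$ directly, and the proposition follows; the analogous bookkeeping at $q_r$ shows that the same construction also yields $R(\mathcal{E}) = L_1^{c} \ast L_2^{c}$ when the rejecting counterparts of $L_i$ are taken into account.
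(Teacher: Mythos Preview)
The paper does not prove this proposition; it is listed among the results ``restated'' from Moniri \cite{Moniri} and simply asserted to carry over to the EFTM setting. Your construction is the natural one, and routing the initial fork through a fresh indeterminacy state is exactly the right EFTM-specific adaptation, since condition~$(ii)$ of Definition~\ref{Definition4.1} makes that the only way to branch with degree~$1$.

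You are also right that the commutation of $\max$ with $\ast'$ is the crux, and that normalising each $\mathcal{E}_i$ to a single canonical accepting ID would dissolve it. The gap is that this normalisation is not obviously available inside the EFTM model. Cleaning up the tape before entering $q_a$ requires non-silent transitions, and by the biconditional in condition~$(ii)$ every such transition must carry degree strictly less than~$1$; moreover the number of clean-up steps depends on the tape content at the moment of acceptance. The multiplicative factor introduced is therefore not uniform across accepting paths, so the normalised machine need not compute the same $e_{\mathcal{E}_i}(w)$ as the original, and your reduction to a single accepting ID breaks down. You would need either a different normalisation scheme (for example, arranging in advance that both simulations operate on a fixed-format work region and halt with identical tape contents regardless of their internal history) or an argument that avoids normalisation entirely; as written, this step is where the proof is incomplete. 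A smaller point in the same vein: if some $\mathcal{E}_i$ happens to use a degree-$0$ transition from its own start state $q_s^i$ directly to an accept or reject state, copying that transition into $\mathcal{E}$ violates the biconditional in condition~$(iii)$, since $q_s^i$ is no longer the global start; such transitions contribute nothing and should simply be deleted.
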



\subsection{Classical Languages and Extended Fuzzy Turing Machines}

In this section, we propose a correspondence between the class of extended fuzzy Turing machines and the class of all classical r.e. or co-r.e. languages. 
 
 \begin{proposition}\label{1}
 Let $L$ be a classical r.e. language. There is an EFTM $\cal E$ such that for each arbitrary input $w$, we have $w \in L$ if and only if ${\cal E}$ accepts $w$ with a non-zero degree $b$, rejects it with degree 0 and indeterminates it with a degree strictly greater than $b$. 
 \end{proposition}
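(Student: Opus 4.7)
Since $L$ is r.e., fix a deterministic Turing machine $M$ with $L(M) = L$, so that $M$ halts-accepts on every $w \in L$ and either halts-rejects or loops on every $w \notin L$. My plan is first to massage $M$ into a convenient normal form by prepending a fresh start state $q_s^{\mr{new}}$ whose only transitions silently move (without changing tape or head position) into the old start state $q_s^{\mr{old}}$. Since no transition in the modified machine targets $q_s^{\mr{new}}$, that state is visited exactly once---at step $0$---and, crucially, it has no direct edge to $q_a$ or $q_r$, so constraint (iii) of Definition \ref{Definition4.1} will hold vacuously.

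I will then build ${\cal E} = ({\cal T}, \ast, \ast^{'}, \eta)$ by taking all of the modified $M$'s states together with a single indeterminacy state $q_I$, keeping all of $M$'s transitions, adding silent transitions $(q_s^{\mr{new}}, a, q_I, a, S)$ for each $a \in \Gamma$, and crucially \emph{no} transitions out of $q_I$. I take $\ast = \min$, dual $\ast^{'} = \max$, assign every inherited $M$-transition the same degree $d \in (0,1)$ (say $d = 1/2$), and every $q_I$-transition degree $1$. Conditions (i), (ii), (iii) of Definition \ref{Definition4.1} are then immediate, and since after at most one nondeterministic choice the machine either lands in the sink $q_I w$ or reduces to the deterministic computation of $M$, the accepting, rejecting, and indeterminacy degrees of ${\cal E}$ are well-defined in the computational sense of Section 3.

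For the biconditional, suppose first $w \in L$: then $M$'s unique accepting computation on $w$ yields a single accepting path in ${\cal E}$ whose truth degree under min is $\min(d, d, \ldots, d) = d$, so $e_{\cal E}(w) = d =: b > 0$. Since $M$ is deterministic and accepts, no rejecting configuration is ever reached, so $e_{\cal E}^{'}(w) = 0$. The only indeterminacy ID reachable from $q_s^{\mr{new}} w$ is $q_I w$, attained solely by the single silent step $q_s^{\mr{new}} w \to q_I w$ of degree $1$, hence $d(q_I w) = 1$ and $e_{\cal E}^{''}(w) = 1 > b$. Conversely, if $w \notin L$, then either $M$ rejects $w$---giving a rejecting path of degree $d > 0$, so $e_{\cal E}^{'}(w) \neq 0$ and the ``rejects with degree $0$'' clause fails---or $M$ loops on $w$, in which case no accepting configuration is ever reached, $e_{\cal E}(w) = 0$, and the ``non-zero accepting degree $b$'' clause fails.

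The main obstacle is the ``only if'' direction, and this is what forces the design choice to keep $q_I$ a sink linked only \emph{from} $q_s^{\mr{new}}$. With the richer connectivity used in the proof of Proposition \ref{power}, the EFTM could silently re-enter any non-halting state of $M$ on the tape $w$ and thereby simulate $M$ starting from an arbitrary internal state, potentially reaching $q_a$ even when $w \notin L$ and destroying the equivalence. Restricting $q_I$ to be a one-shot dead-end avoids these spurious accepting paths, at the price of making the indeterminacy degree degenerate (always $1$ whenever defined); but that is exactly what is needed, since $b = d < 1$ guarantees $e_{\cal E}^{''}(w) > b$ whenever $w \in L$.
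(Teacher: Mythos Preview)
Your proof is correct and takes a genuinely different route from the paper's. The paper first normalizes $M$ so that it has only accepting final states and loops on every $w\notin L$, then adds a fresh start $q_s$ with a degree-$t$ edge to the old start and a degree-$0$ edge to a new reject state, and finally wires an indeterminacy state $q_I$ \emph{bidirectionally} to every non-accept, non-reject state. You instead keep $M$'s native rejecting behaviour, drop the degree-$0$ reject edge, fix the t-norm to $\min$, and make $q_I$ a one-way sink reachable only from the fresh start. Your minimalist connectivity cleanly forecloses the spurious accepting paths you rightly flag---under the paper's richer wiring a silent hop $q_I\to q'$ lets the machine resume $M$ from an arbitrary internal state on the initial tape, which could in principle reach $q_a$ even when $w\notin L$; your sink design makes the ``only if'' direction immediate. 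The trade-off is that your indeterminacy degree collapses to the constant $1$, whereas the paper's bidirectional wiring ties $e''_{\cal E}(w)$ to the length of $M$'s run and so preserves the loop-detecting flavour of Proposition~\ref{power}. One small presentational gap: you specify degrees for the inherited $M$-transitions and the $q_I$-transitions but not for the silent edges $q_s^{\mr{new}}\to q_s^{\mr{old}}$; your later computation $\min(d,d,\ldots,d)=d$ makes clear you intend degree $d$ there as well, and with that reading conditions (i)--(iii) of Definition~\ref{Definition4.1} are satisfied.
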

\begin{proof}
Suppose that $M$ is a classical Turing machine which recognizes $L$. Without loss of generality we  assume that $M$  has only accepting states as its final states and eventually accepts an input $w$ if $w$ is in $L$ and never halts on other inputs. Let $t$ be a real number in (0,1). Construct the EFTM ${\cal E}$ as follows:
\begin{itemize}
\item[$\bullet$] change the starting state of $M$ to a non-starting state $q$,
\item[$\bullet$] correspond degree $t$ to all transitions of $M$,
\item[$\bullet$] consider a starting state $q_s$,
\item[$\bullet$] consider two nondeterministic transitions from $q_s$ to:\\
 $~~~~~~~~(i)$  state $q$ of the new modified version of $M$ with degree $t$, \\
 $~~~~~~~~(ii)$  a new rejecting state with degree 0 (note that here the degree 0 is allowed here by Definition \ref{Definition4.1}),
\item[$\bullet$]  consider transitions from all non-accept and non-reject states to  an  indeterminacy state with degree 1 and vice versa.
\end{itemize}
It can be shown that for each input $w$ if $w\in L$,  then there is a number $b\in (0,1)$ such that ${e_{\cal E}}(w)=b\leq t$ and ${e^{''}_{\cal E}}(w)> b$, also $ {e^{'}_{\cal E}}(w)=0$. Therefore, $\cal E$ is the desired machine.

\end{proof}
Remind that in above proposition, the language  $L$ is \textit{ indeterminable} by $\cal E$.

\begin{proposition}\label{2}
Let $L$ be a (classical)  co-r.e language. There is an EFTM ${\cal E}$ such that for each input $w$, we have $w \notin L$ if and only if ${\cal E}$ rejects $w$ with a non-zero degree $r$, accepts it with degree 0 and indeterminates it with a degree strictly greater that $r$.
\end{proposition}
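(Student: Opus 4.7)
The plan is to dualize the construction in Proposition \ref{1} by swapping the roles of accepting and rejecting states. Since $L$ is co-r.e., its complement $L^c$ is r.e., so there is a classical deterministic Turing machine $M$ that accepts $w$ in finitely many steps exactly when $w \in L^c$ and fails to halt on every other input; without loss of generality, assume the only final states of $M$ are accepting.

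First I would construct the EFTM $\cal E$, equipped for concreteness with the product t-norm, along the same pattern as in Proposition \ref{1}, making the following adjustments: $(i)$ relabel every accepting state of $M$ as a rejecting state in $\cal E$; $(ii)$ fix $t \in (0,1)$ and assign degree $t$ to each transition inherited from $M$; $(iii)$ change $M$'s old start state to an ordinary non-starting state $q$, introduce a fresh start state $q_s$, and add two nondeterministic outgoing transitions from $q_s$, one silent transition to $q$ with degree $t$ and one to a new accepting state with degree $0$ (permitted by clause $(iii)$ of Definition \ref{Definition4.1}); $(iv)$ add an indeterminacy state $q_I$ with silent degree-$1$ transitions between $q_I$ and every non-accept, non-reject state, as prescribed by clauses $(i)$ and $(ii)$ of Definition \ref{Definition4.1}. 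By construction this satisfies all the constraints of an EFTM.

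Next I would verify the equivalence. If $w \notin L$, then $M$ accepts $w$ in some finite number $k$ of steps, so $\cal E$ admits a rejecting computation simulating this run whose truth degree is $r = t^{k+1} > 0$. The only accepting path is the direct degree-$0$ transition from $q_s$ to the new accepting state, which forces $e_{{\cal E}}(w) = 0$. Mirroring the reasoning behind Proposition \ref{1}, the shortest path reaching the indeterminacy configuration arising after $i$ simulated $M$-steps has degree $t^{i+1}$, and taking the minimum over the reachable indices $0 \leq i \leq k-1$ yields $e_{{\cal E}}^{''}(w) = t^{k}$, which is strictly greater than $r$ since $t \in (0,1)$. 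Conversely, if $\cal E$ rejects $w$ with a non-zero degree, the rejecting path cannot be the degree-$0$ direct transition (which in any case ends at the new accepting state, not a rejecting one), so it must terminate at a state inherited from $M$ and relabeled as rejecting, forcing $M$ to accept $w$ and hence $w \notin L$.

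The main delicate step will be justifying the lower bound on the indeterminacy degree: one has to check that detours through $q_I$, taken repeatedly or interleaved with the $M$-simulation, cannot collapse the infimum of path degrees below $t^k$. Since every transition into or out of $q_I$ carries degree $1$ and leaves the tape untouched, such detours contribute only identity factors to the t-norm product, while any path that produces the tape content after $i$ genuine $M$-steps must include at least those $i$ transitions of degree $t$ together with the initial $q_s \to q$ transition, giving $i+1$ non-identity factors. Combining this head count with strict monotonicity of the product t-norm on $(0,1)$ delivers the desired strict inequality $e_{{\cal E}}^{''}(w) > r$.
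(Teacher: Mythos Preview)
Your construction is essentially identical to the paper's: take a recognizer $M$ for $L^c$, relabel its accepting states as rejecting, assign a fixed degree $s\in(0,1)$ to every inherited transition, prepend a fresh start state with a degree-$s$ edge into $M$ and a degree-$0$ edge to a new accepting state, and wire an indeterminacy state to all non-final states. The paper stops at ``it can be shown that \ldots'', whereas you go further by fixing the product t-norm and computing the degrees explicitly; this extra detail is helpful but not a different method.
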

\begin{proof}
Since $L$ is a co-r.e language then there exists a Turing machine $M$ which recognizes $L^c$. Let $s$ be a real number in (0,1). Construct the EFTM ${\cal E}$ as follows: 
\begin{itemize}
\item[$\bullet$]  replace the starting state of $M^′$ with a new non-starting state $q$,
\item[$\bullet$] correspond degree $s$ to each transition of $M^′$,
\item[$\bullet$]  change the accepting states of $M^′$ to rejecting states,
\item[$\bullet$] consider a starting state $q_s$,
\item[$\bullet$] add  two non-deterministic transitions from the start state $q_s$: \\$~~~~~~~$ (1)  to the state $q$ with degree $s$, 
\\$~~~~~~~$ (2)  to an accepting state with degree 0 (note that here the degree 0 is allowed here by Definition \ref{Definition4.1}), 
\item[$\bullet$] consider transitions from all non-accept and non-reject states to  an  indeterminacy state with degree 1 and vice versa.
\end{itemize}

It can be shown that for each input $w$ if $w\in L^{c}$,  then there is a number $r\in (0,1)$ such that ${e^{'}_{\cal E}}(w)=r\leq s$ and ${e^{''}_{\cal E}}(w)> r$, also $ {e_{\cal E}}(w)=0$. Thus, $\cal E$ is the desired machine.

\end{proof}
%

\section{Final remarks}


We gave some modifications in previous definitions of fuzzy Turing machines proposed in \cite{2008,Moniri,Wiedermann}.  
We applied a  BFS-based search method, also obtained an upper level bound to define  the notions of  accepting  and rejecting degrees of a given input, computationally.
We introduced the class of Extended Fuzzy Turing Machines 
which are equipped with  \textit{indeterminacy} states. Finally, we used indeterminacy states to catch the loops of classical Turing machines. 



\begin{small}

\begin{thebibliography}{1}







\bibitem{2008}
 B. C. Bedregal ,  S. Figueira,  On the computing power of fuzzy Turing machines,  Fuzzy Sets and Systems, 159, 1072-1083, 2008.

\bibitem{Gerla87}
L. Biacino, G. Gerla, Recursively enumerable L-sets, Zeitsehr. 1. math. Logik und Crundlagen d. Math. Bd. 33, S. 107-113, 1987.

\bibitem{Gerla1982}
G. Gerla, Sharpness relation and decidable fuzzy sets. IEEE Trans. Automat. Control AC-27. Oct., p. 1113, 1982.


\bibitem{Hajek}
P. H\'ajek, Metamathematics of Fuzzy Logic, Kluwer Academic Publishers, Dordrecht; 1998.
 
 
 


\bibitem{Harkteroad}
L. Harkteroad, Fuzzy recursion. ret's arid isols, Zeitschrift Math. Logik Grundlagen Math,30 , 425-430, 1984.

%
 

\bibitem{Lee}
 E. T. Lee , L. A. Zadeh,  Note on fuzzy languages,  Information Sciences, 4 (1), 421-434, 1969.

\bibitem{Moniri}
M. Moniri, Fuzzy and Intuitionistic Fuzzy Turing Machines,  Fundamenta Informaticae, 123, 305-315, 2013.

\bibitem{Santos}
E.S.  Santos,  Fuzzy algorithms,  Information and Control, 17:326-339, 1970.
 
\bibitem{Santos1}
E.S.  Santos,  Fuzzy and probabilistic programs,  Information Sciences, 10, 331-335, 1976.

\bibitem{Sipser}
M. Sipser, Introduction to the Theory of Computation 3rd Edition,  Cengage Learning, Boston, MA, 2012.




\bibitem{soare}
 R. I. Soare,  Recursively Enumerable Sets and Degrees, Springer-Verlag, Berlin, 1987.


\bibitem{Wiedermann1}
J. Wiedermann, Charactrizing the super-Turing power and efficiency of classical fuzzy Turing machines, Theoretical Computer Science, 317, 61-69, 2004.

\bibitem{Wiedermann}
J. Wiedermann, Fuzzy Turing machines revised,  Computer Artificial Intelligence, 21(3), 1-13, 2003.


\bibitem{algorithm}
L. A. Zadeh,  Fuzzy algorithms,  Information and Control, 2, 94-102, 1968.

\bibitem{Zadeh}
L. A. Zadeh, Fuzzy sets, Information and Control, 8, 338-353, 1965.








\end{thebibliography}
\end{small}

%
%

\end{document}